\documentclass[a4paper]{llncs}
\usepackage[utf8]{inputenc}

\usepackage{amsmath, amssymb}
\usepackage{hyperref}
\usepackage{xspace}
\usepackage{xcolor}

\newcommand{\N}{\ensuremath{\mathbb{N}}\xspace}
\newcommand{\Z}{\ensuremath{\mathbb{Z}}\xspace}

\newcommand\T{\mathcal{T}}
\newcommand\bimplies{\ensuremath{\rightarrow}}
\newcommand\TFrag{\mathcal{F}}

\newcommand\Array{\smt{Array}}
\newcommand\Int{\smt{Int}}
\newcommand\structure[1]{\textbf{#1}}
\newcommand\Arr{\ensuremath{\mathcal{A}}\xspace}
\newcommand\tarr{\ensuremath{\T_\Arr}\xspace}
\newcommand\store[3]{\ensuremath{#1\langle #2\lhd#3\rangle}\xspace}
\newcommand\select[2]{\ensuremath{#1[#2]}\xspace}

\newcommand\diff{\smt{diff}}
\newcommand\tint{\ensuremath{\T_\Z}\xspace}

\newcommand\mI{\ensuremath{I}}
\newcommand\I{$\mI$\xspace}
\newcommand\A{$A$\xspace}
\newcommand\B{$B$\xspace}
\newcommand\smt[1]{\mathsf{#1}}

\newcommand\Model{\ensuremath{\mathcal{M}}\xspace}


\title{Interpolation and the Array Property Fragment}
\author{Jochen Hoenicke \and Tanja Schindler}
\institute{University of Freiburg}

\begin{document}

\maketitle

\begin{abstract}
  Interpolation based software model checkers have been successfully
  employed to automatically prove programs correct.
  Their power comes from interpolating SMT solvers that check the
  feasibility of potential counterexamples and compute candidate
  invariants, otherwise.
  This approach works well for quantifier-free theories, like equality theory or linear arithmetic.

  For quantified formulas, there are SMT solvers that can decide expressive fragments of quantified formulas, e.\,g.,
  EPR, the array property fragment, and the finite almost uninterpreted fragment.
  However, these solvers do not support interpolation.
  It is already known that in general EPR does not allow for interpolation.
  In this paper, we show the same result for the array property fragment.
\end{abstract}

\section{Introduction}
\label{sec:introduction}
Several software model checkers~\cite{DBLP:conf/fsttcs/CassezMB14,DBLP:conf/tacas/CassezSRPSM17,DBLP:conf/tacas/DanglLW15,DBLP:conf/tacas/HeizmannCDGHLNM18,DBLP:conf/popl/HenzingerJMM04,DBLP:conf/fmcad/JovanovicD16,DBLP:conf/cav/KupriyanovF14,DBLP:conf/cav/McMillan06,DBLP:conf/tacas/NutzDMP15}
use interpolating SMT solvers for various subtasks of software verification.
In counterexample-based approaches, for instance, paths through the program that lead to an error are encoded as formulas.
If the formula is satisfiable, the resulting model can be translated into a concrete counterexample to correctness.
If the formula is unsatisfiable, a \emph{Craig interpolant} can be generated that serves to compute a candidate invariant.
These candidate invariants can then be checked for inductiveness.

The software model checkers are powered by interpolating SMT solvers.
Ideally, the fragment supported by the SMT solver is decidable and supports interpolation, to ensure completeness of the interpolation procedure.
Additionally, the fragment should be closed under the usual logical operations like conjunction and negation, to facilitate inductiveness checks in the solver.
Many quantifier-free fragments of SMT theories and their combinations are decidable and closed under logical operations and interpolation~\cite{DBLP:journals/tcs/McMillan05,DBLP:conf/cade/YorshM05,DBLP:journals/corr/abs-1204-2386}.

The full first-order logic is closed under all operations and interpolation, but it is not decidable.
There are decidable fragments, for example, EPR, the array property fragment, and the finite almost uninterpreted fragment.
For each of the fragments, the full fragment that supports $\exists\forall$-formulas is not closed under negation\footnote{For example $\exists i.\ \forall j.\ a[j] \leq a[i]$ is in the array property and the finite almost uninterpreted fragment (after Skolemisation), but its negation is not.}.
If restricted to the alternation-free fragment, EPR and the array property fragment are closed under all logical operations.
In~\cite{DBLP:conf/cav/DrewsA16}, Drews and Albarghouti show that the alternation free EPR fragment is not closed under interpolation.
We show in this paper a similar result for the (alternation free) array property fragment:
we give an example for an interpolation problem where the input formulas are in the array property fragment and prove that there
exists no interpolant within the array property fragment.

\section{Notation and Basic Definitions}
\label{sec:notation}
We use sorted first-order logic and the model-theoretic approach to define theories from the SMT-LIB standard~\cite{BarFT-RR-17}.
The \emph{sort symbols} $s$ with arity $\mathit{ar}(s)\geq 0$ inductively define the set of \emph{sorts}:
if $\sigma_1,\dots,\sigma_n$ are sorts and $\mathit{ar}(s)=n$, then $\sigma:= s\sigma_1\dots\sigma_n$ is a sort.
The \emph{function symbols} $f$ with rank $\sigma_1\dots\sigma_n\sigma$ define the set of \emph{terms}: 
if $t_1,\dots,t_n$ are terms of sort $\sigma_1,\dots,\sigma_n$ respectively, then $f(t_1,\dots,t_n)$ is a term of sort $\sigma$.
A \emph{signature} $\Sigma$ is given by a set of sort symbols and a set of ranked function symbols.
A \emph{$\Sigma$-formula} is a first-order formula built from the sorts and function symbols in $\Sigma$.
A \emph{$\Sigma$-structure} $\structure{A}$ maps sorts $\sigma$ to a non-empty set $\sigma^{\structure{A}}$ and function symbols $f$ with rank $\sigma_1\dots\sigma_n\sigma$ to a corresponding function $\sigma_1^{\structure{A}}\times \sigma_n^{\structure{A}} \rightarrow \sigma^{\structure{A}}$.
A \emph{theory} $\T$ is given by its signature $\Sigma$ and a class of $\Sigma$-structures, which are also called the \emph{models} of $\T$.
A \emph{theory fragment} $\TFrag$ of a theory $\T$ with signature $\Sigma$ is a subset of $\Sigma$-formulas.

The \emph{theory of arrays} \tarr is parameterized by a signature $\Sigma$ defining other sort symbols that can be used for index and element sorts.  
The signature of $\tarr[\Sigma]$ contains in addition a sort symbol $\Array$ of arity $2$.
For each index sort $\sigma_I$ and element sort $\sigma_E$, the sort $\sigma_A := \Array\ \sigma_I \sigma_E$ represents the sort of arrays with the given index and element sort.
The signature contains a select function \(\select{\cdot}{\cdot}\) of rank $\sigma_A \sigma_I \sigma_E$ and a store function \(\store{\cdot}{\cdot}{\cdot}\) of rank $\sigma_A \sigma_I \sigma_E\sigma_A$.
For array~\(a\), index~\(i\), and element~\(v\), \(\select{a}{i}\) returns the element stored in \(a\) at index \(i\), and \(\store{a}{i}{v}\) returns a fresh array that is a copy of \(a\) where the element at \(i\) is replaced by \(v\).
The models $\Model$ of $\tarr$ fix the meaning of $\Array$, $\select{\cdot}{\cdot}$, and $\store{\cdot}{\cdot}{\cdot}$ as follows.
\begin{align*}
  (\Array\ \sigma_I \sigma_E)^{\Model} &:= \sigma_I^{\Model} \rightarrow \sigma_E^{\Model}\\
  (\select{\cdot}{\cdot})^{\Model}(a,i)\hphantom{,v} &:= a(i)\\
  (\store{\cdot}{\cdot}{\cdot})^{\Model}(a, i, v) &:
     \begin{array}[t]{r@{}l}
         \sigma_I^{\Model} &{}\to \sigma_E^{\Model}\\ 
         j &{}\mapsto \begin{cases} a(j) & i\neq j\\
                                 v    & i = j
                   \end{cases}
     \end{array}
\end{align*}

The \emph{theory of integers} $\tint$ contains a sort symbol $\Int$ of arity $0$ as well as the usual arithmetic functions $+,-,*,<,\leq$ in its signature.
Its models $\Model$ define $\Int^\Model := \Z$ and fix the meaning of the arithmetic functions as usual.
In the following we use the combined theory $\T := \tarr + \tint$, where the signature contains all symbols from $\tarr$ and $\tint$ and the meaning of the theory symbols is defined as above.

An \emph{interpolation problem} \((A,B)\) is a pair of formulas where the conjunction \(A\land B\) is unsatisfiable.
Given an interpolation problem \((A,B)\), the symbols shared between \A and \B are called \emph{shared}, symbols only occurring in \A are called \emph{\A-local} and symbols only occurring in \B, \emph{\B-local}.
We call a term or formula \emph{shared} if it contains only shared symbols.
A \emph{Craig interpolant} for an interpolation problem \((A,B)\) is a formula \I such that
(i) \A implies \I in the theory $\T$,
(ii) \I and \B are $\T$-unsatisfiable and
(iii) \I is shared between \A and \B.

A theory fragment $\TFrag$ is \emph{closed under interpolation} if for each interpolation problem \((A,B)\) in $\TFrag$ there exists an interpolant \I in $\TFrag$.

\section{The Array Property Fragment}

The theory of arrays is often used in software verification to model heap memory or arrays in C, for instance.
The quantifier-free fragment of the theory of arrays is decidable, and there exist interpolation methods for an extension of the quantifier-free fragment of the theory of arrays~\cite{DBLP:journals/corr/abs-1204-2386,DBLP:conf/cade/HoenickeS18}.
However, in many verification tasks, it is not sufficient to consider quantifier-free formulas.
For instance, to prove correctness of a sorting algorithm, it is necessary to reason about all elements of an array within a given range.

Quantifiers are challenging as general First-Order logic with theories is undecidable.
The same holds for the theory of arrays with quantifiers.
The decidable array property fragment was introduced by Bradley~et~al. in~\cite{DBLP:conf/vmcai/BradleyMS06} as a subset of the \(\exists^*\forall^*_\Z\)-fragment of the theory $\T$ of arrays and integers.

An \emph{array property} is a quantified formula of the form
\[\forall \bar{j}.\, \varphi_I(\bar{j}) \bimplies \varphi_V(\bar{j})\]
where \(\bar{j}\) are index variables, and the form of the \emph{index guard} \(\varphi_I(\bar{j})\) and the \emph{value constraint} \(\varphi_V(\bar{j})\) are restricted syntactically as follows:
the index guard \(\varphi_I(\bar{j})\) consists of ground literals and literals containing quantified variables of the form
\[j \leq t, \quad t \leq j, \quad j \leq j', \quad j = t, \quad j = j'\]
where \(t\) is a ground term and \(j,j' \in \bar{j}\) are quantified variables.
The literals can be connected by \(\land\) and \(\lor\) but the literals containing quantified variables must not appear negated.
The value constraint \(\varphi_V(\bar{j})\) consists of ground literals and literals containing quantified index variables \(j\) only within array reads \(\select{a}{j}\).
Array reads on quantified variables must not be nested, i.e., \(\select{a}{j}\) must not occur in arguments of the select function \(\select{\cdot}{\cdot}\) or the store function \(\store{\cdot}{\cdot}{\cdot}\).

The \emph{array property fragment} of $\T$ consists of all Boolean combinations of array properties and quantifier-free formulae.
It is sufficiently expressive to describe properties such as sortedness or equality of arrays in a given range.
In the original presentation, one quantifier alternation is allowed, i.e., an array property can be existentially quantified.
However, the resulting fragment is not closed under negation which is crucial for interpolation-based invariant generation.

We follow here the more restricted definition in~\cite{DBLP:books/daglib/0019162} that does not allow alternation of quantifiers.
Under this restriction, the fragment is closed under negation.
However, it is not closed under interpolation\footnote{If one allows \(\exists^*\forall^*_\Z\)-formulae, one might find an interpolant in this form, but the negation does not lie in the fragment and hence cannot be used for inductiveness checks.}, i.e., there exist formulas within the array property fragment for which no interpolant in the array property fragment exists, as we show in the next section.

\section{Interpolation in the Array Property Fragment}
In the following, we show that the array property fragment as defined above is not closed under interpolation by giving a concrete counterexample.

\begin{example}\label{thm:ap_notclosed_ex}
  Consider the following interpolation problem in the array property fragment.
  \begin{align*}
	&A: \forall \smt{i}.\ \select{\smt{a}}{\smt{i}} < \select{\smt{b}}{\smt{k}}\\
	&B: \forall \smt{j}.\ \lnot (\select{\smt{a}}{\smt{l}} < \select{\smt{b}}{\smt{j}})
  \end{align*}
  Clearly, \(A \land B\) is unsatisfiable:
  \A implies that \(\select{\smt{a}}{\smt{l}} < \select{\smt{b}}{\smt{k}}\) must hold by instantiating \(\smt{i}\) with \(\smt{l}\), in contradiction to \(\lnot (\select{\smt{a}}{\smt{l}} < \select{\smt{b}}{\smt{k}})\) which results from instantiating \(\smt{j}\) with \(\smt{k}\) in \B.
  Possible interpolants are
  \[\mI_1 = \exists \smt{j}. \forall \smt{i}.\ \select{\smt{a}}{\smt{i}} <\select{\smt{b}}{\smt{j}} \quad\text{ or }\quad \mI_2 = \forall \smt{i}. \exists \smt{j}.\ \select{\smt{a}}{\smt{i}} < \select{\smt{b}}{\smt{j}}\enspace.\]
  Both do not lie in the array property fragment due to quantifier alternation.

  In fact, there does not exist an interpolant within the array property fragment.
  Intuitively, the only shared terms that can be used in the interpolant, are the arrays \(\smt{a}\) and \(\smt{b}\).
  There is no shared term to capture the indices \(\smt{k}\) and \(\smt{l}\).
  One can only obtain index terms by using a quantifier, which will lead to quantifier alternation.
\end{example}

It is well known that the quantifier-free fragment of the theory of arrays is not closed under interpolation, but if one adds an auxiliary function the extension allows for interpolation~\cite{DBLP:journals/corr/abs-1204-2386}. 
This can be achieved by adding the $\diff$ function that returns some index where two arrays differ.
The meaning of the $\diff$ function is not fixed; it only needs to satisfy the property
\begin{equation}\label{eq:diff}\tag{$*$}
  s(\diff^{\Model}(s,t)) \neq t(\diff^{\Model}(s,t)) \text{ for all $s,t$ with $s(j) \neq t(j)$ for some $j$}\,.
\end{equation}

However, in the above example the $\diff$ function is not sufficient to define an interpolant without quantifier alternation.
The informal reason is that it is only required to return some index where two arrays differ (if they differ), hence, to capture the correct index, more information on the index, expressible in shared terms, would be needed.
The following theorem will prove this formally.

\begin{theorem}\label{thm:ap_notclosed_thm}
  The array property fragment is not closed under interpolation.
\end{theorem}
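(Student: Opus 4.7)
Proof plan: proceed by contradiction. Suppose \(\mI\) is an interpolant in the array property fragment. Since the only shared non-theory symbols are the arrays \(\smt a\) and \(\smt b\) -- the constants \(\smt k\) and \(\smt l\) being \A- and \B-local -- every ground index subterm of \(\mI\) evaluates to a specific integer drawn from some finite set \(T\subset\Z\), and \(\mI\) contains only finitely many integer constants, all bounded in absolute value by some \(C\). The proof leverages this bounded vocabulary as the handle against \(\mI\).

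The overall idea is to exhibit, depending on \(T\) and \(C\), two shared-signature models \(M_A\) and \(M_B\) such that (i) \(M_A\) satisfies \(\exists\,\smt k.\ \forall\,\smt i.\ \select{\smt a}{\smt i} < \select{\smt b}{\smt k}\); (ii) \(M_B\) satisfies \(\exists\,\smt l.\ \forall\,\smt j.\ \lnot(\select{\smt a}{\smt l} < \select{\smt b}{\smt j})\); and (iii) \(\mI(M_A) = \mI(M_B)\). Since a valid interpolant must evaluate to \(\top\) on \(M_A\) and to \(\bot\) on \(M_B\), clause (iii) yields the contradiction. Concretely, the construction makes \(M_A\) and \(M_B\) agree on every named index in \(T\) and differ only at a fresh index \(K\notin T\) with \(|K|\) larger than every value in \(T\): on \(M_A\) a \(\smt b\)-entry at \(K\) is set far beyond \(C\) so that it bounds \(\smt a\) and witnesses (i); on \(M_B\) an \(\smt a\)-entry at some \(t_0\in T\) is set large enough to bound \(\smt b\) and witness (ii), with \(\select{\smt b}{K}\) adjusted downward to sit below that bound.

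The critical step is (iii), and it exploits the two structural restrictions of the fragment. Index guards combine only \(\leq\)- and \(=\)-literals positively between quantified variables and ground terms of \(T\), so the fresh index \(K\) cannot be singled out by any guard and is only reached when an unrestricted quantifier binds some \(j_i\) to \(K\). Value constraints contain each quantified variable only inside a single, unnested array read \(\select{\smt a}{j_i}\) or \(\select{\smt b}{j_i}\), so any literal whose truth could flip between \(M_A\) and \(M_B\) reduces to an arithmetic comparison between exactly one such read at \(K\) and a polynomial in ground integer constants and array reads at named indices, all of magnitude at most \(C\). Choosing the differing values at \(K\) generically relative to \(\mI\)'s constants -- namely, larger than any polynomial that can be assembled from \(C\) and the values on \(T\), and avoiding the finite set of values that could satisfy any specific equality literal in \(\varphi_V\) -- makes every such comparison produce the same truth in both models, and a routine induction over the Boolean structure of \(\mI\) lifts this invariance to \(\mI\) itself. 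The main obstacle is precisely this genericity step: equality literals in \(\varphi_V\) such as \(\select{\smt b}{\smt j} = \select{\smt a}{t_0} + \select{\smt a}{\smt{j'}}\) can pin the value at \(K\) to a specific polynomial value and thereby threaten the invariance if values are chosen carelessly. Overcoming it requires selecting the differing values simultaneously generic with respect to every such polynomial expression permissible in \(\mI\); each such expression excludes only finitely many values, so a cofinite set of safe choices exists, but the uniformity of this selection across all array property subformulas compatible with \(\mI\)'s bounded vocabulary is the technical heart of the proof.
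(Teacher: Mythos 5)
Your high-level strategy is legitimate: to rule out a candidate interpolant $\mI$ it suffices to exhibit one model of \A and one model of \B on which $\mI$ takes the same truth value, and you correctly identify that the only shared non-theory symbols are $\smt{a}$ and $\smt{b}$. But the concrete construction --- two models agreeing everywhere except at one fresh index $K$, with $\select{\smt{b}}{K}$ pushed above all of $\smt{a}$ in $M_A$ and kept below $\select{\smt{a}}{t_0}$ in $M_B$ --- cannot deliver claim (iii), and the ``genericity'' step you flag as the technical heart is not merely unfinished but unrepairable in this setup. The satisfaction of \A versus \B \emph{forces} an order-relevant difference at $K$: in $M_A$ you need $\smt{b}(K) > \sup\smt{a} \geq \select{\smt{a}}{t_0}$, while in $M_B$ you need $\smt{b}(K) \leq \select{\smt{a}}{t_0}$. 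Since you take $t_0 \in T$, there is a ground index term $t$ of $\mI$ with value $t_0$, and the fragment formula $\forall j.\ \select{\smt{b}}{j} \leq \select{\smt{a}}{t}$ then evaluates to true in $M_B$ and false in $M_A$ (witness $j = K$); so for a candidate $\mI$ containing this as a subformula, such as $\mI_0 = \forall j.\ \select{\smt{b}}{j} \leq \select{\smt{a}}{0}$, no tuning of the free parameters gives (i)--(iii). Genericity of the values at $K$ is powerless because the comparison is against $\select{\smt{a}}{t}$, which tracks the generic value itself, and setting the two values of $\smt{b}(K)$ to $V$ and $V+1$ does not help for the same reason. Moving $t_0$ outside $T$ does not help either: a value constraint may contain two independently quantified unnested reads, so $\forall j\,j'.\ \select{\smt{b}}{j} \leq \select{\smt{a}}{j'} \lor \select{\smt{a}}{j'} \leq 0$ finds the pair $(K, t_0)$ without naming either index. (There is also a circularity left unaddressed: $T$ is defined from the values of ground index subterms of $\mI$, e.g.\ $\select{\smt{a}}{0}+3$, but those values depend on the models you are about to build.)

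The paper escapes exactly this trap by abandoning a single ad-hoc pair in favour of an infinite sequence $\Model_0, \Model_1, \dots$ of ``staircase'' models in which $\sup\smt{a}$ and $\sup\smt{b}$ leapfrog each other by the minimal amount, consecutive models differ only on a suffix of indices beyond $i$ on which both arrays are constant, and it proves a stabilization lemma: every shared term and every shared array property is eventually constant along the sequence (by induction on terms for Properties~1 and~2, plus an argument using the positive $\leq$/$=$ syntax of index guards and the constant suffix to pull every quantified variable back below a threshold $i_2$). Eventual constancy, rather than indistinguishability of one fixed pair, is what defeats formulas like $\mI_0$ above: they may well separate $\Model_1$ from $\Model_2$, but they stabilize and therefore cannot alternate with the parity of $i$ as an interpolant must. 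To salvage your route you would need either such a ``cannot alternate forever'' argument over a family of models, or a far more delicate pair in which the difference between the two models is not detectable by any two-variable universal comparison of $\smt{a}$ against $\smt{b}$.
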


\begin{proof}
  Consider again Example~\ref{thm:ap_notclosed_ex}.
  In the following, we show that there does not exist an interpolant without quantifier alternation for this interpolation problem.
  The proof follows the idea of Drews~and~Albarghouti for showing a similar result for EPR~\cite{DBLP:conf/cav/DrewsA16}.
  
  We construct a sequence of models that are alternatingly models for \A and for \B and show that no formula in the array property fragment that contains only shared terms can distinguish between models for \A and models for \B from a certain point on.
  
  For \(i \in \N\), let
  \begin{align*}
	\smt{k}^{\Model_i} &= \smt{l}^{\Model_i} = i\\
	\smt{a}^{\Model_i} (j) &=
	\begin{cases}
	  0							&\text{ if } j \leq 0 \\
	  \lceil \frac{j}{2} \rceil &\text{ if } 0 < j \leq i \\
	  \lceil \frac{i}{2} \rceil &\text{ if } i < j
	\end{cases}\\
	\smt{b}^{\Model_i}(j) &=
	\begin{cases}
	  1									&\text{ if } j \leq 0 \\
	  \lfloor \frac{j}{2} \rfloor + 1	&\text{ if } 0 < j \leq i \\
	  \lfloor \frac{i}{2} \rfloor + 1	&\text{ if } i < j
	\end{cases}\\
	\diff^{\Model_i}(s,t) &=
	\begin{cases}
	  0 &\text{ if }s(j)=t(j)\text{ for all } j\\
	  \mathrm{max}\{j \mid j < 0 \land s(j) \neq t(j)\} &\text{ if }s(j) \neq t(j)\text{ for some }j<0\\
	  \mathrm{min}\{j \mid s(j) \neq t(j)\} &\text{ otherwise}
	\end{cases}
  \end{align*}
  
  For any even number \(i\) (including 0), \(\Model_i\) is a model for \A, and for any odd number \(i\), \(\Model_i\) is a model for \B:
  the maximum value of both \(\smt{a}\) and \(\smt{b}\) is stored at index \(i\).
  If \(i\) is even, \(\lceil \frac{i}{2} \rceil = \lfloor \frac{i}{2} \rfloor\) and hence for \(\smt{k}^{\Model_i} = i\), the value \(\smt{b}^{\Model_i}(i) = \lfloor \frac{i}{2} \rfloor + 1\) is greater than all values in \(\smt{a}^{\Model_i}\).
  If \(i\) is odd, \(\lceil \frac{i}{2} \rceil = \lfloor \frac{i}{2} \rfloor + 1\) and for \(\smt{l}^{\Model_i} = i\), the value \(\smt{a}^{\Model_i}(i) = \lceil \frac{i}{2} \rceil\) is greater or equal to all values in \(\smt{b}^{\Model_i}\).
  
  Note that {\rm max} and {\rm min} in the semantics of $\diff$ are well-defined:
  in the second case, \(\{j \mid j < 0 \land \select{s}{j} \neq \select{t}{j}\}\) is a non-empty set of negative integers, and hence has a maximum element.
  In the last case, \(\{j \mid \select{s}{j} \neq \select{t}{j}\}\) is a non-empty set of non-negative integers and has a minimal element.
  By definition, if \(s \neq t\), \(\diff(s,t)\) returns an index where \(s\) and \(t\) differ.
  Hence, property \eqref{eq:diff} is satisfied.
  
  \medskip
  
  We will now show that any formula in the array property fragment only containing shared symbols cannot distinguish between \(\Model_i\) and \(\Model_{i'}\) for large \(i\) and \(i'\).
  Therefore it cannot be an interpolant of \((A,B)\):
  an interpolant evaluates to true for all even \(i\) and to false for all odd \(i\).
  
  We first consider quantifier-free terms and distinguish between array-valued terms \(t_a\) and scalar terms \(t_s\).
  The latter includes also Boolean terms.
  The following properties hold:
  \begin{enumerate}
	\item For all shared scalar terms \(t_s\), there exists a number \(i\) such that for all models \(\Model_{i'}\) with \(i \leq i'\), the value of \(t_s\) does not change, i.e., \(t_s^{\Model_{i}} = t_s^{\Model_{i'}}\).
	
	\item For all shared array terms \(t_a\), there exists a number \(i\) such that
	\begin{enumerate}
	  \item the prefix of the array \(t_a\) does not change for subsequent models, i.e., for all \(i',i''\) with \(i \leq i' \leq i''\), and for all indices \(j\) with \(j \leq i'\), \(t_a^{\Model_{i'}}(j) = t_a^{\Model_{i''}}(j)\), and
	  
	  \item for all \(i'\) with \(i \leq i'\), the suffix of the array \(t_a\) repeats the element at index \(i'\), i.e., for all \(j\) with \(i' < j\), it holds \(t_a^{\Model_{i'}}(i') = t_a^{\Model_{i'}}(j)\).
	\end{enumerate}
  \end{enumerate}
  Note that if a property holds for one number \(i\), then it also holds for all larger numbers by definition.
  
  We show Properties~1 and 2 by induction over the term \(t_s\) and \(t_a\), respectively.
  
  \emph{Base case:}
  For integer constants, Property~1 holds for all \(i\).
  For the shared terms \(\smt{a}\) and \(\smt{b}\), Property~2 holds for all \(i\).
  
  \emph{Induction step:}
  For function applications and predicates that do not involve arrays, e.g. \(t_s = t_{1} + t_{2}\), we assume that Property~1 holds for \(t_1\) and \(t_2\) with \(i_1\) and \(i_2\).
  Then for \(i := \mathrm{max}\{i_1, i_2\}\), Property~1 holds for \(t_s\).
  
  For a select term \(\select{t_a}{t_{s}}\), we assume that Property~1 holds for \(t_s\) with \(i_1\) and Property~2 holds for \(t_a\) with \(i_2\).
  Then Property~1 holds for \(\select{t_a}{t_{s}}\) with \(i := \mathrm{max}\{i_1, i_2, t_s^{\Model_{i_1}}\}\):
  for all \(i'\) with \(i \leq i'\), we derive \((\select{t_a}{t_s})^{\Model_i} = (\select{t_a}{t_s})^{\Model_{i'}}\) from Property~2(a) since \(i_2 \leq i \leq i'\) and for \(j := t_s^{\Model_i} = t_s^{\Model_{i'}}\), we have \(j \leq i\).
  
  For a term \(\diff(t_1,t_2)\), we assume that Property~2 holds for \(t_1\) and \(t_2\) with \(i_1\) and \(i_2\), respectively, and thus, it holds for both \(t_1\) and \(t_2\) with \(i_0 := \mathrm{max}\{i_1,i_2\}\).
  If for some \(i\) with \(i \geq i_0\), \(t_1^{\Model_i} \neq t_2^{\Model_i}\), then because of Property~2(b), \(t_1\) and \(t_2\) differ at some index \(j\) with \(j \leq i\).
  By definition, \((\diff(t_1,t_2))^{\Model_i} \leq i\).
  Because of Property~2(a) and the definition of $\diff$, for \(i'\) with \(i' > i\), we have \((\diff(t_1,t_2))^{\Model_{i'}} = (\diff(t_1,t_2))^{\Model_i}\).
  If for all \(i\) with \(i \geq i_0\), \(t_1^{\Model_i} = t_2^{\Model_i}\), then \((\diff(t_1,t_2))^{\Model_i} = 0\) by definition, and Property~1 holds for \(\diff(t_1,t_2)\) with \(i_0\).
  
  For a store term \(\store{t_a}{t_0}{t_1}\), we assume that Property~1 holds for \(t_0\) and \(t_1\) with \(i_0\) and \(i_1\), and Property~2 holds for \(t_a\) with \(i_2\).
  With \(i := \mathrm{max}\{i_0, i_1, i_2, {t_0^{\Model_{i_0}} + 1}\}\), Property~2 holds for \(\store{t_a}{t_0}{t_1}\):
  (a) holds for \(j \neq t_0^{\Model_{i_0}}\) because it holds for \(t_a\), and for \(j = t_0^{\Model_{i_0}}\), (a) follows from Property~1 for \(t_1\).
  Property~2(b) holds for \(j > i\) because it holds for \(t_a\) and \(i > t_0^{\Model_{i_0}}\).
  
  \medskip
  
  Next we show that for an array property \(\varphi : \forall \bar{j}.\, \varphi_I(\bar{j}) \bimplies \varphi_V(\bar{j})\), there exists a number \(i\) such that the value of \(\varphi\) stays constant, i.e., \(\varphi^{\Model_{i'}} = \varphi^{\Model_{i}}\) for \(i' \geq i\).
  
  First, we collect all subterms of \(\varphi\) that do not contain \(j\) and compute the corresponding \(i\) that satisfy Property~1 or 2, respectively.
  Let \(i_0\) be the maximum of all these numbers.
  For all ground terms \(t\) in the index guard \(\varphi_I(\bar{j})\), compute \(t^{\Model_{i_0}}\) and let \(i_1\) be the maximum of \(i_0\) and all numbers \(t^{\Model_{i_0}} + 1\).
  
  If for all \(i_2 \geq i_1\), \(\varphi^{\Model_{i_2}}\) is true, the value of \(\varphi\) obviously stays constant in all subsequent models.
  
  If there exists \(i_2 \geq i_1\) such that \(\varphi^{\Model_{i_2}}\) is false, there is some \(\bar{j}\) such that \(\varphi_I(\bar{j}) \bimplies \varphi_V(\bar{j})\) is false under \(\Model_{i_2}\).
  If we replace all components of \(\bar{j}\) that are greater than \(i_2\) by \(i_2\), the formula is still false.
  The index guard is still true: 
  Let \(j_m \in \bar{j}\) be greater than \(i_2\).
  As \(i_2\) is greater than the maximum of all values \(t^{\Model_{i_0}} = t^{\Model_{i_2}}\) for the ground terms \(t\) in the index guard, literals of the form \(j_m \leq t\) or \(j_m = t\) must evaluate to false in \(\Model_{i_2}\).
  For a literal \(t \leq j_m\), the replacement \(t \leq i_2\) will evaluate to true because of the definition of \(i_2\).
  If a literal \(j_m = j_n\) evaluates to true in \(\Model_{i_2}\), then we replace both \(j_m\) and \(j_n\) by \(i_2\) and the resulting equality holds trivially.
  This means, by replacing \(j_m\) by \(i_2\) we can only obtain more literals that evaluate to true in the index guard.
  The evaluation under \(\Model_{i_2}\) of the value guard is unchanged because of Property~2(b) for the arrays in the select terms containing \(j\), and Property~1 for the other terms.
  Note that quantified variables \(j\) cannot appear in store or $\diff$ terms, because array reads \(\select{t_a}{j}\) must not be nested.

  Thus, we can assume that all components of \(\bar{j}\) are smaller or equal to \(i_2\).
  Then, for all \(i'\) with \(i' \geq i_2\), \((\varphi_I(\bar{j}) \bimplies \varphi_V(\bar{j}))^{\Model_{i'}}\) is still false.
  This follows from Property~2(a) for the arrays in select terms, and Property~1 for all other terms.
  Thus, for all \(i'\) with \(i' \geq i_2\), \(\varphi^{\Model_{i'}}\) is constantly false.
  
  \medskip
  
  Every formula in the array property fragment over shared symbols is a Boolean combination of array properties and quantifier-free formulas.
  For each of these formulas, there exists a number \(i\) from which on the formulas do not change their value.
  If we choose the maximum of all these numbers \(i\), the whole formula does not change its value between \(\Model_i\) and \(\Model_{i+1}\) and as one of  \(\Model_i\) and \(\Model_{i+1}\) is a model for \A and the other is a model for \B, the formula cannot be an interpolant for \((A,B)\).
  \qed
\end{proof}

\section{Conclusion}

The array property fragment is an expressive but still decidable fragment for the theory of arrays and therefore useful for checking program correctness.
In this paper, we have shown that the array property fragment is not closed under interpolation.
Our proof also shows that, in contrast to the quantifier-free fragment, the $\diff$ function does not establish closedness under interpolation for the array property fragment.
Thus, it is not sufficient to restrict the solver to the array property fragment, if one wants to use interpolants to derive new invariants used in later solver queries.

As our example shows, the problem is that the array property fragment cannot express interpolants of simple quantified formulas.
Therefore, for interpolation based software model checking a more expressive fragment is needed.
One possible candidate is the almost uninterpreted fragment~\cite{DBLP:conf/cav/GeM09}.
This fragment allows for quantifier alternation and can express the interpolants in our example.
However, this fragment is undecidable.  
One can achieve decidability by using the finite almost uninterpreted fragment, however, this fragment also does not have nice closure properties: it is not even closed under conjunction.

\bibliographystyle{plain}
\bibliography{references}

\end{document}